\def\id{\leavevmode\hbox{\small1\kern-3.8pt\normalsize1}}
\def\identity{\leavevmode\hbox{\small1\kern-3.8pt\normalsize1}}
\renewcommand{\epsilon}{\varepsilon}
\newtheorem{definition}{Definition} 
\newtheorem{thm}[definition]{Theorem}
\newtheorem{corollary}[definition]{Corollary}
\newtheorem*{rep@theorem}{\rep@title}
\newcommand{\newreptheorem}[2]{%
\newenvironment{rep#1}[1]{%
 \def\rep@title{#2 \ref{##1} (restatement)}%
 \begin{rep@theorem}}%
 {\end{rep@theorem}}}
\def\ba#1\ea{\begin{align}#1\end{align}}
\def\ban#1\ean{\begin{align*}#1\end{align*}}
\newcommand{\be}{\begin{equation}}
\newcommand{\ee}{\end{equation}}
\def\benum{\begin{enumerate}}
\def\eenum{\end{enumerate}}
\def\squareforqed{\hbox{\rlap{$\sqcap$}$\sqcup$}}
\def\qed{\ifmmode\squareforqed\else{\unskip\nobreak\hfil
\penalty50\hskip1em\null\nobreak\hfil\squareforqed
\parfillskip=0pt\finalhyphendemerits=0\endgraf}\fi}
\def\endenv{\ifmmode\;\else{\unskip\nobreak\hfil
\penalty50\hskip1em\null\nobreak\hfil\;
\parfillskip=0pt\finalhyphendemerits=0\endgraf}\fi}
\newcommand{\<}{\langle}
\renewcommand{\>}{\rangle}
\def\id{{\operatorname{id}}}
\def\be{\begin{equation}}
\def\ee{\end{equation}}
\def\ben{\begin{eqnarray}}
\def\een{\end{eqnarray}}
\def\bei{\begin{itemize}}
\def\eei{\end{itemize}}
\mathchardef\ordinarycolon\mathcode`\:
\def\vcentcolon{\mathrel{\mathop\ordinarycolon}}
\newcommand{\nc}{\newcommand}
 \nc{\proj}[1]{|#1\rangle\!\langle #1 |} 
\nc{\avg}[1]{\langle#1\rangle}
\nc{\conv}{\operatorname{conv}}
\nc{\smfrac}[2]{\mbox{$\frac{#1}{#2}$}} \nc{\Tr}{\operatorname{Tr}}
\nc{\ox}{\otimes} \nc{\dg}{\dagger} \nc{\dn}{\downarrow}
\nc{\lmax}{\lambda_{\text{max}}}
\nc{\lmin}{\lambda_{\text{min}}}
\nc{\csupp}{{\operatorname{csupp}}}
\nc{\qsupp}{{\operatorname{qsupp}}} \nc{\var}{\operatorname{var}}
\nc{\rar}{\rightarrow} \nc{\lrar}{\longrightarrow}
\nc{\poly}{\operatorname{poly}}
\nc{\polylog}{\operatorname{polylog}} \nc{\Lip}{\operatorname{Lip}}
\nc{\Om}{\Omega}
\nc{\wt}[1]{\widetilde{#1}}
\def\>{\rangle}
\def\<{\langle}
\def\d{\delta}
\nc{\glneq}{{\raisebox{0.6ex}{$>$}  \hspace*{-1.8ex} \raisebox{-0.6ex}{$<$}}}
\nc{\gleq}{{\raisebox{0.6ex}{$\geq$}\hspace*{-1.8ex} \raisebox{-0.6ex}{$\leq$}}}
\nc{\vholder}[1]{\rule{0pt}{#1}}
\nc{\wh}[1]{\widehat{#1}}
\nc{\h}[1]{\widehat{#1}}
\nc{\ob}[1]{#1}
\def\beq{\begin {equation}}
\def\eeq{\end {equation}}
\def\be{\begin{equation}}
\def\ee{\end{equation}}
\nc{\eq}[1]{(\ref{eq:#1})} 
\nc{\eqs}[2]{\eq{#1} and \eq{#2}}
\nc{\eqn}[1]{Eq.~(\ref{eqn:#1})}
\nc{\eqns}[2]{Eqs.~(\ref{eqn:#1}) and (\ref{eqn:#2})}
\nc{\region}{\cS\cW}
\newenvironment{protocol*}[1]
  {
    \begin{center}
      \hrulefill\\
      \textbf{#1}
  }
  {
    \vspace{-1\baselineskip}
    \hrulefill
    \end{center}
  }
\begin{document}
\title{All two-party facet Bell inequalities are violated by Almost Quantum correlations}
\author{Ravishankar \surname{Ramanathan}}
\email{ravishankar.r.10@gmail.com}
\affiliation{Department of Computer Science, The University of Hong Kong, Pokfulam Road, Hong Kong}
\begin{abstract}
The characterization of the set of quantum correlations is a problem of fundamental importance in quantum information. The question whether every proper (tight) Bell inequality is violated in Quantum theory is an intriguing one in this regard. Here, we make significant progress in answering this question, by showing that every tight Bell inequality is violated by 'Almost Quantum' correlations, a semi-definite programming relaxation of the set of quantum correlations. As a consequence, we show that many (classes of) Bell inequalities including two-party correlation Bell inequalities and multi-outcome non-local computation games, that do not admit quantum violations, are not facets of the classical Bell polytope. To do this, we make use of the intriguing connections between Bell correlations and the graph-theoretic Lov\'{a}sz-theta set, discovered by Cabello-Severini-Winter (CSW). We also exploit connections between the cut polytope of graph theory and the classical correlation Bell polytope, to show that correlation Bell inequalities that define facets of the lower dimensional correlation polytope are violated in quantum theory. The methods also enable us to derive novel (almost) quantum Bell inequalities, which may be of independent interest for self-testing applications.
\end{abstract}

\maketitle


Quantum correlations, i.e., the correlations between quantum systems in a Bell-type experiment, are of central interest in Quantum Information Theory. Their violation of Bell inequalities shows, in a device-independent manner, that quantum theory fundamentally differs from all classical theories. These quantum 'non-local' correlations also allow to perform tasks that are impossible in classical theory, such as generation of cryptographic key secure against post-quantum eavesdroppers \cite{Ekert}, intrinsic randomness certification and amplification \cite{Buhrman10}, and reduction of communication complexity \cite{Pironio10, Brandao16}. For fundamental reasons as well as to develop these applications, it is of utmost importance to characterize the set of quantum correlations, and understand how it fits in between the polytopes of classical and general non-signalling correlations. 

The proper (tight) Bell inequalities are facets of the classical polytope, that are not also facets of the no-signalling polytope. A problem of fundamental importance in the characterization of quantum correlations was raised by Gill in \cite{Gill05}, namely whether every tight Bell inequality is violated in quantum theory. The analogous question pertaining to facets of the binary-outcome correlation polytope (the classical polytope of two-party binary-outcome correlations, excluding the local marginal terms) was raised by Avis et al. in \cite{AII06}. Escol\'{a}, Calsamiglia and Winter in a recent breakthrough result \cite{ECW20} answered the latter question, showing that the binary-outcome correlation polytope does not share any non-trivial facets in common with the set of quantum correlations. The corresponding question for multi-party tight Bell inqualities had been previously answered in a fundamental breakthrough result by Fritz et al. who identified a class of non-trivial tight Bell inequalities (called Local Orthogonality inequalities) that are not violated in quantum theory, when three or more parties are involved in the Bell experiment. The corresponding Local Orthogonality principle is a fundamental information-theoretic principle that serves to delineate the set of correlations realizable in a physical theory. In the multipartite Bell scenario, it only remains an open question whether all non-trivial facet Bell inequalities without quantum violation are of the local orthogonality form. In the bipartite Bell experiment though, the local orthogonality constraints reduce to the no-signalling conditions, and do not provide any non-trivial facet constraints to the quantum correlation set. 

In this paper, we study the question of whether there are tight two-party Bell inequalities with no quantum violation. We first describe novel classes of two-party Bell inequalities that do not admit quantum violation, including certain two-party correlation Bell inequalities and their multi-outcome generalization. We then prove our central result that all two-party Bell inequalities that define facets of the classical Bell polytope, are violated by a natural semi-definite programming relaxation to the set of quantum correlations, that has been dubbed 'Almost Quantum' theory. We show how the novel classes introduced earlier are proven to not describe facets as a consequence of the result, as well as how it subsumes a recent breakthrough  on two-party XOR games with no quantum advantage. We also show how connections discovered by Avis et al. in \cite{AII06} can be used to show that all correlation Bell inequalities that define facets of the lower dimensional correlation Bell polytope, are violated in quantum theory. Finally, we show how the methods used in proving these results also enable us to derive novel quantum Bell inequalities, a fact which has potential for applications to self-test quantum correlations \cite{Scarani}. We end with a brief discussion and open questions.

\subsection{Preliminaries}

Consider a two-party Bell experiment. Suppose one party Alice chooses to measure one of $m_A$ inputs $i_A = 1, \dots, m_A$, and obtains one of $d_{A, i_A}$ outputs $o_A \in \{1, \dots, d_{A, i_A}\}$. Similarly, the other party Bob chooses to measure one of $m_B$ inputs $i_B = 1, \dots, m_B $, and obtains one of $d_{B, i_B}$ outputs $o_B \in \{1, \dots, d_{B, i_B}\}$ outputs. Such a Bell scenario is denoted by the notation $\textbf{B}(2; m_A, \vec{d}_A; m_B, \vec{d}_B)$, where $\vec{d}_A = (d_{A,1}, \dots, d_{A, m_A})$ and $\vec{d}_B = (d_{B,1}, \dots, d_{B, m_B})$. In some instances, this notation can also be shortened to $\textbf{B}(\vec{d}_A, \vec{d}_B)$ for simplicity. The joint probability of obtaining the outcomes $(o_A, o_B)$ given the measurement settings $(i_A, i_B)$ is denoted as $P_{O_A, O_B | I_A, I_B}(o_A, o_B | i_A, i_B)$. We will view these $n = \left(\sum_{i_A = 1}^{m_A} d_{A, i_A} \right) \left(\sum_{i_B = 1}^{m_B} d_{B, i_B} \right)$ probabilities as forming the components of a vector $P_{O_A, O_B | I_A, I_B} = | P \rangle$ in $\mathbb{R}^n$, where the inputs and outputs are implicit, and the probabilities are also described as forming a box $P$.  

In the Bell scenario $\textbf{B}\left((2,2), (2,2)\right)$ where each party chooses $2$ dichotomic observables, all the facet inequalities are known: up to permutation of the outcomes they correspond to the well-known Clauser-Horne-Shimony-Holt inequality \cite{CHSH}. While it is in principle possible using specific facet-enumeration algorithms to obtain all the facet inequalities of the classical polytope in any given Bell scenario, in practice the complexity of the problem grows rapidly and facet inequalities have been found in cases with a few more observables and outcomes \cite{Pironio2}. In fact, the problem of listing all facet Bell inequalities has been demonstrated to be NP-complete \cite{Avis} making this an important but hard-to-solve problem in the theory of quantum non-locality. 

The box $P$ is a valid normalized no-signalling box, satisfying the no-signalling constraints of relativity and the normalization of probabilities, if it obeys the constraints of
\begin{enumerate}
\item Non-negativity: $P_{O_A, O_B | I_A, I_B}(o_A, o_B | i_A, i_B) \geq 0$ for all $o_A, o_B, i_A, i_B$, 

\item Normalization: $\sum_{o_A = 1}^{d_{A, i_A}} \sum_{o_B = 1}^{d_{B, i_B}} P_{O_A, O_B | I_A, I_B}(o_A, o_B | i_A, i_B)  = 1$ for all $i_A, i_B$,

\item No-Signalling: 
\begin{widetext}
\begin{eqnarray}
\sum_{o_A = 1}^{d_{A, i_A}} P_{O_A, O_B | I_A, I_B}(o_A, o_B | i_A, i_B) &=& \sum_{o'_A = 1}^{d_{A, i'_A}} P_{O_A, O_B | I_A, I_B}(o'_A, o_B | i'_A, i_B), \quad \text{for all} \; \; i_A, i'_A, o_B, i_B, \nonumber \\
\sum_{o_B = 1}^{d_{B, i_B}} P_{O_A, O_B | I_A, I_B}(o_A, o_B | i_A, i_B) &=& \sum_{o'_B = 1}^{d_{B, i'_B}} P_{O_A, O_B | I_A, I_B}(o_A, o'_B | i_A, i'_B), \quad \text{for all} \; \; i_B, i'_B, o_A, i_A.
\end{eqnarray}
\end{widetext}
\end{enumerate}

The convex hull of all boxes $P$ that satisfy the above constraints forms the No-Signalling Polytope of the Bell scenario $\textbf{NS}\left[\textbf{B}(2; m_A, \vec{d}_A; m_B, \vec{d}_B)\right]$. A fundamental result in polyhedral theory, known as the Minkowski-Weyl theorem, states that a polytope represented as the convex hull of a finite number of points, such as $\textbf{NS}\left[\textbf{B}(2; m_A, \vec{d}_A; m_B, \vec{d}_B)\right]$ can also be equivalently represented as the intersection of finitely many half-spaces. One may write the above constraints in the form of inequalities, with the normalization and no-signalling equalities being written as two inequalities, and re-write the No-Signalling Polytope in the following canonical form:
\begin{equation}
\textbf{NS}\left[\textbf{B}(2; m_A, \vec{d}_A; m_B, \vec{d}_B)\right] = \Big\{ | P \rangle \in \mathbb{R}^n : A \cdot | P \rangle \leq | b \rangle \Big\}.
\end{equation}
Here, the matrix $A$ is an $m \times n$ matrix, with $m = n + 2 m_A m_B + 2 (m_A-1) \sum_{i_B = 1}^{m_B} \left(d_{B, i_B} - 1 \right) +2 (m_B - 1) \sum_{i_A = 1}^{m_A} \left(d_{A, i_A} - 1 \right)$. This value for $m$ comes from $n$ non-negativity constraints, $m_A m_B$ normalization equalities, and $(m_A-1) \sum_{i_B = 1}^{m_B} \left(d_{B, i_B} - 1 \right)$ no-signalling equalities defining Bob's marginal probabilities and similarly $(m_B - 1) \sum_{i_A = 1}^{m_A} \left(d_{A, i_A} - 1 \right)$ no-signaling equalities defining Alice's marginal probabilities. The vector $| b \rangle$ is an appropriate defined $m$-dimensional vector with entries in $\{0, 1, -1\}$. Crucially, this gives the dimensionality of the No-Signalling Polytope to be
\begin{widetext}
\begin{equation}
\text{dim} \left(\textbf{NS}\left[\textbf{B}(2; m_A, \vec{d}_A; m_B, \vec{d}_B)\right] \right) = \left( \sum_{i_A = 1}^{m_A} \left(d_{A, i_A} - 1 \right) + 1 \right) \left( \sum_{i_B = 1}^{m_B} \left(d_{B, i_B} - 1 \right) + 1 \right) - 1 =: D.
\end{equation}
\end{widetext}

The boxes within the No-Signalling polytope that additionally satisfy the integrality constraint given by
\begin{enumerate}
\setcounter{enumi}{3}
\item Integrality $P_{O_A, O_B | I_A, I_B}(o_A, o_B | i_A, i_B) \in  \{0,1\}$ for all $o_A, o_B, i_A, i_B$,
\end{enumerate}
are said to be Local Deterministic Boxes (LDBs). The convex hull of these LDBs forms the classical or Bell polytope denoted by $\textbf{C}\left[\textbf{B}(2; m_A, \vec{d}_A; m_B, \vec{d}_B)\right]$. This is the set of all correlations obtainable from local hidden variable theories. 

The set of Quantum Correlations denoted by $\textbf{Q}\left[\textbf{B}(2; m_A, \vec{d}_A; m_B, \vec{d}_B)\right]$ also lies within the No-Signalling polytope. This set consists of boxes $P$ where each component $P_{O_A, O_B | I_A, I_B}(o_A, o_B | i_A, i_B)$ is obtained as:
\begin{equation}
\label{eq:qcorr-def}
P_{O_A, O_B | I_A, I_B}(o_A, o_B | i_A, i_B) = \text{Tr}\left[ \rho \left(E^{A}_{i_A, o_A} \otimes E^{B}_{i_B, o_B} \right) \right]
\end{equation}
for some quantum state $\rho \in \mathcal{H}_d$ of some arbitrary dimension $d$, and sets of projection operators $\{E^{A}_{i_A, o_A}\}$ for Alice and $\{E^{B}_{i_B, o_B}\}$ for Bob. Notably, the measurement operators satisfy the requirements of (i) Hermiticity: $\left(E^{A}_{i_A, o_A}\right)^{\dagger} = E^{A}_{i_A, o_A}$ for all $i_A, o_A$, and $\left(E^{B}_{i_B, o_B}\right)^{\dagger} = E^{B}_{i_B, o_B}$, for all $i_B, o_B$, (ii) Orthogonality: $E^{A}_{i_A, o_A} E^{A}_{i_A, o'_A} = \delta_{o_A, o'_A} E^{A}_{i_A, o_A}$ for all $i_A$, and $E^{B}_{i_B, o_B} E^{B}_{i_B, o'_B} = \delta_{o_B, o'_B} E^{B}_{i_B, o_B}$ for all $i_B$, and (iii) Completeness: $\sum_{o_A} E^{A}_{i_A, o_A} = \mathds{1}$ for all $i_A$ and $\sum_{o_B} E^{B}_{i_B, o_B} = \mathds{1}$ for all $i_B$. The set $\textbf{Q}\left[\textbf{B}(2; m_A, \vec{d}_A; m_B, \vec{d}_B)\right]$ is convex but not a polytope. We have the inclusions $\textbf{C}\left[\textbf{B}(2; m_A, \vec{d}_A; m_B, \vec{d}_B)\right] \subseteq \textbf{Q}\left[\textbf{B}(2; m_A, \vec{d}_A; m_B, \vec{d}_B)\right] \subseteq \textbf{NS}\left[\textbf{B}(2; m_A, \vec{d}_A; m_B, \vec{d}_B)\right]$.

By the Minkowski-Weyl theorem, the set $\textbf{C}\left[\textbf{B}(2; m_A, \vec{d}_A; m_B, \vec{d}_B)\right]$ can also be equivalently represented as the intersection of finitely many half-spaces
\begin{widetext}
\begin{equation}
\textbf{C}\left[\textbf{B}(2; m_A, \vec{d}_A; m_B, \vec{d}_B)\right] = \Big\{ | P \rangle \in \mathbb{R}^n : B_{G_i} \cdot | P \rangle \leq \omega_c(G_i) \; \; \; \; \forall i \in I \Big\},
\end{equation}
\end{widetext}
where $\{B_{G_i} \cdot | P \rangle \leq \omega_c(G_i), \; \; i \in I \}$ is a finite set of inequalities. The inequalities supporting facets of $\textbf{C}\left[\textbf{B}(2; m_A, \vec{d}_A; m_B, \vec{d}_B)\right]$ provide a minimal set of such inequalities, and are usually referred to as facet Bell inequalities, or in some instances in the literature just as the Bell inequalities. In particular, any valid inequality for $\textbf{C}\left[\textbf{B}(2; m_A, \vec{d}_A; m_B, \vec{d}_B)\right]$ can be derived from the facet inequalities. 

The introduction of a few notions from polytope theory is in order here. Boxes $P_1, \dots, P_m$ in $\mathbb{R}^n$ are said to be affinely independent if the unique solution to $\sum_{i=1}^{m} \mu_i P_i = 0$, $\sum_{i=1}^{m} \mu_i = 0$ is that $\mu_i = 0$ for all $i =1, \dots, m$. Equivalently, the boxes are affinely independent if $P_2 - P_1, \dots, P_m - P_1$ are linearly independent. The affine hull of a set of boxes is the set of all their affine combinations. The affine set has dimension $K$, if the maximum number of affinely independent boxes it contains is $K+1$. An inequality $B_{G_i} \cdot | P \rangle \leq \omega_c(G_i)$ satisfied by all boxes in $\textbf{C}\left[\textbf{B}(2; m_A, \vec{d}_A; m_B, \vec{d}_B)\right]$ is called a valid Bell inequality. Given a valid inequality $B_{G_i} \cdot | P \rangle \leq \omega_c(G_i)$, the set 
\begin{equation}
F = \Big\{ | P \rangle \in \mathbb{R}^n :  B_{G_i} \cdot | P \rangle = \omega_c(G_i) \Big\}
\end{equation}
is called a face of the classical polytope and the inequality is said to support $F$. The dimension of $F$ is the dimension of its affine hull. If $F \neq \null$ and $F \neq \textbf{C}\left[\textbf{B}(2; m_A, \vec{d}_A; m_B, \vec{d}_B)\right]$, it is a proper face. Proper faces satisfy by definition $dim(F) \leq dim\left(\textbf{C}\left[\textbf{B}(2; m_A, \vec{d}_A; m_B, \vec{d}_B)\right]\right) -1 = D - 1$. Proper faces of maximal dimension are called facets. A Bell inequality $B_{G_i} \cdot | P \rangle \leq \omega_c(G_i)$ thus supports a facet of the classical polytope if and only if $D$ affinely independent boxes of $\textbf{C}\left[\textbf{B}(2; m_A, \vec{d}_A; m_B, \vec{d}_B)\right]$ satisfy it with equality. 

Finding the quantum violation of a Bell inequality is also a well-known hard problem. In the special instance of two-party correlation Bell inequalities, also known as XOR games, the quantum value can be directly determined by means of a semi-definite program, as shown by Tsirelson \cite{Tsirelson}. For more general two-party Bell inequalities, where the parties observe more than two measurement outcomes, or where the inequality also involves marginal probabilities observed by either party, finding the quantum violation is not as easy. In \cite{NPA}, a hierarchy of semi-definite programs was formulated for optimization with non-commuting variables, and this NPA hierarchy is ubiquitously employed to efficiently determine upper bounds to the quantum violation for general Bell inequalities. The hierarchy was also shown to converge to a set $\textbf{Q}^{pr}\left[\textbf{B}(2; m_A, \vec{d}_A; m_B, \vec{d}_B)\right]$, which is the set consisting of boxes $P$ where each component $P_{O_A, O_B | I_A, I_B}(o_A, o_B | i_A, i_B)$ is obtained as:
\begin{equation}
P_{O_A, O_B | I_A, I_B}(o_A, o_B | i_A, i_B) = \text{Tr}\left[ \rho \left(E^{A}_{i_A, o_A}  E^{B}_{i_B, o_B} \right) \right],
\end{equation}
with $\left[E^{A}_{i_A, o_A}, E^{B}_{i_B, o_B} \right] = 0$ for all $i_A, o_A, i_B, o_B$. The above differs from Eq.(\ref{eq:qcorr-def}) in that the strict requirement of tensor product structure is replaced with the requirement of only commutation between different parties' measurements. It is clear that $\textbf{Q}\left[\textbf{B}(2; m_A, \vec{d}_A; m_B, \vec{d}_B)\right] \subseteq \textbf{Q}^{pr}\left[\textbf{B}(2; m_A, \vec{d}_A; m_B, \vec{d}_B)\right]$. 

In the NPA hierarchy, one considers sets consisting of sequences of product projection operators $S_1 = \{ \mathds{1} \} \cup \{E^{A}_{i_A, o_A} \} \cup \{E^{B}_{i_B, o_B}\}$, $S_2 = S_1 \cup \{E^{A}_{i_A, o_A} E^{B}_{i_B, o_B}\}$, etc. The convex sets corresponding to different levels of this hierarchy $\textbf{Q}_{l}\left[\textbf{B}(2; m_A, \vec{d}_A; m_B, \vec{d}_B)\right]$ are constructed by testing for the existence of a certificate $\Gamma^{l}$ associated to the set of operators $S_l$ by means of a semi-definite program. This certificate $\Gamma^{l}$ corresponding to level $l$ of the NPA hierarchy is a $\left|S_l \right| \times |S_l|$ matrix whose rows and columns are indexed by the operators in the set $S_l$. The certificate $\Gamma^{l}$ is required to be a complex Hermitian positive semi-definite matrix satisfying the following constraints on its entries: (i) $\Gamma^{l}_{\mathds{1}, \mathds{1}} = 1$, and (ii) $\Gamma^{l}_{Q, R} = \Gamma^{l}_{S, T}$ if $Q^{\dagger} R = S^{\dagger} T$. The latter condition in particular imposes that $\Gamma^{l}_{\mathds{1}, E^{A}_{i_A, o_A} E^{B}_{i_B, o_B}} = \Gamma^{l}_{E^{A}_{i_A, o_A}, E^{B}_{i_B, o_B}} = \Gamma^{l}_{E^{A}_{i_A, o_A} E^{B}_{i_B, o_B}, E^{A}_{i_A, o_A} E^{B}_{i_B, o_B}} = P_{O_A, O_B | I_A, I_B}(o_A, o_B | i_A, i_B)$. 

One of the levels of the NPA hierarchy denoted  $\textbf{Q}_{1+AB}\left[\textbf{B}(2; m_A, \vec{d}_A; m_B, \vec{d}_B)\right]$ or $\tilde{\textbf{Q}}\left[\textbf{B}(2; m_A, \vec{d}_A; m_B, \vec{d}_B)\right]$ has been highlighted as being the \textit{Almost Quantum set} \cite{AQ}. This set corresponds to an intermediate level of the hierarchy and is associated to the set of operators $\tilde{S} = \{\mathds{1} \} \cup \{E^{A}_{i_A, o_A} E^{B}_{i_B, o_B} \}$, where the latter set includes measurement operators for every $i_A, o_A, i_B, o_B$. Interestingly, this set has been proven to satisfy many of the information-theoretic principles designed to pick out quantum theory from among all no-signalling theories, such as the Local Orthogonality Principle, No advantage in Non-local computation etc. \cite{LO, NLC}. Moreover, a number of Bell inequalities achieve their optimal quantum violations already at this level, including the aforementioned correlation Bell inequalities.

\subsection{Bell inequalities with no quantum violation}

In identifying Bell inequalities for which no quantum violation exists, facet Bell inequalities play a crucial role. On the one hand, finding a facet Bell inequality with no quantum violation implies finding the largest dimensional face of the set of quantum correlations which one can describe analytically. On the other hand, if we relax the facet requirement, one can readily construct many Bell inequalities with no quantum violation by suitably tilting known facet Bell inequalities (that do admit quantum violation). 

(i) For instance, consider the well-studied CHSH Bell scenario $\textbf{B}\left((2,2),(2,2) \right)$, where Alice measures one of two binary observables $A_1, A_2$ and Bob similarly measures binary observables $B_1, B_2$. The classical polytope in this scenario is a well-characterized $8$-dimensional polytope with the only non-trivial facet (the trivial facets are the non-negativity constraints $P_{O_A, O_B | I_A, I_B}(o_A, o_B | i_A, i_B) \geq 0$) known to be the CHSH inequality (up to local relabelings of inputs and outputs and an exchange of parties) given as:
\begin{equation}
\langle A_1 B_1 \rangle + \langle A_1 B_2 \rangle + \langle A_2 B_1 \rangle - \langle A_2 B_2 \rangle \leq 2,
\end{equation}
where as usual the correlator is $\langle A_{i_A} B_{i_B} \rangle = \sum_{k=0,1} (-1)^k P_{O_A, O_B | i_A , i_B}(o_A \oplus o_B = k | i_A, i_B)$ for $i_A,i_B = 1,2$. Tilting the above facet inequality by choosing coefficients $\alpha_{11}, \alpha_{12}, \alpha_{21}, \alpha_{22} > 0$ normalized as $\alpha_{11} + \alpha_{12} + \alpha_{21} + \alpha_{22} = 1$, one gets the following class of inequalities 
\begin{eqnarray}
\label{Eq:chsh-no-adv}
&&\alpha_{11} \langle A_1 B_1 \rangle + \alpha_{12} \langle A_1 B_2 \rangle + \alpha_{21} \langle A_2 B_1 \rangle - \alpha_{22} \langle A_2 B_2 \rangle \nonumber \\ &&\qquad  \leq 1 - 2 \min \{\alpha_{11}, \alpha_{12}, \alpha_{21}, \alpha_{22} \}.
\end{eqnarray}
Using the Tsirelson solution for the quantum value of correlation Bell inequalities with binary outcomes, a simple characterization for the XOR games with no quantum advantage was obtained in \cite{RKMH14}. We can use the characterization to show that (non-facet) Bell inequalities of the form in (\ref{Eq:chsh-no-adv}) do not admit quantum violation when the following condition is satisfied by the coefficients (in the case when $\alpha_{22} < \alpha_{11}, \alpha_{12}, \alpha_{21}$) \cite{RQSMA17}:
\begin{eqnarray}
\label{eq:chsh-noad-cond}
&&\left(\alpha_{12} \alpha_{21} + \alpha_{11} \alpha_{22} \right)^2 \nonumber \\
&&\leq \left(\alpha_{11} + \alpha_{12} \right)\left(\alpha_{11} + \alpha_{21} \right)\left(\alpha_{12} - \alpha_{22} \right)\left(\alpha_{21} - \alpha_{22} \right).
\end{eqnarray}
An analogous condition holds when one of the other coefficients is the minimum as well. As an example satisfying the above condition, one may take $\{\alpha_{11}, \alpha_{12}, \alpha_{21}, \alpha_{22} \} = \left\{\frac{9}{16}, \frac{1}{4}, \frac{1}{8}, \frac{1}{16} \right\}$. 

(ii) A second important consideration in finding Bell inequalities with no quantum violation is a recent breakthrough result \cite{ECW20} showing that any two-player XOR game, for which the corresponding Bell inequality is tight, has a quantum advantage. Their result, automatically rules out inequalities such as (\ref{Eq:chsh-no-adv}) under condition (\ref{eq:chsh-noad-cond}) and the XOR games with no quantum advantage derived in \cite{NLC, RKMH14} from being facet Bell inequalities. However, binary-outcome correlation Bell inequalities only form a small subset of possible two-party Bell inequalities, since they restrict to the case $\d_{A, i_A} = d_{B, i_B} = 2$ for all $i_A, i_B$ and furthermore to the case that the inequality only consider terms involving the correlators $\langle A_{i_A} B_{i_B} \rangle$. Indeed, correlation Bell inequalities directly generalize to Bell scenarios where the number of outcomes for each party is $d > 2$ leading to Bell inequalities of the type
\begin{eqnarray}
\label{eq:corr-d-dim}
&&\sum_{i_A = 1}^{m_A} \sum_{i_B = 1}^{m_B} \sum_{o_A, o_B = 1}^{d} q\left(i_A, i_B\right) \nonumber \\
&&\quad P_{O_A, O_B | i_A , i_B}\left(o_A + o_B \; \text{mod} \; d = f(i_A, i_B) | i_A, i_B \right) \leq \beta_c, \nonumber \\
\end{eqnarray}
for some function $f$ from the inputs $(i_A, i_B)$ to a value in $\{1, \dots, d\}$. Let the root of unity be $\zeta = \exp \left(2 \pi i / d\right)$, and define $d-1$ (game) matrices of dimension $m_A \times m_B$ as
\begin{eqnarray}
\Phi_k := \sum_{i_A = 1}^{m_A} \sum_{i_B = 1}^{m_B} q\left(i_A, i_B \right) \zeta^{k f(i_A, i_B)} |i_A \rangle \langle i_B |,
\end{eqnarray}
for $k = 1, \dots, d-1$. Then, a sufficient condition for Bell inequalities of the form (\ref{eq:corr-d-dim}) to have no quantum violation was shown by us in \cite{RQSMA17}. Namely, if the maximum left and right singular vectors $|u_{1} \rangle$ and $|v_1 \rangle$ of $\Phi_1$ are composed entirely of roots of unity entries alone (arbitrary integral powers of $\zeta$), and simultaneously if the maximum left and right singular vectors $|u_k \rangle$ and $|v_k \rangle$ of $\Phi_k$ are obtainable from $|u_1 \rangle$ and $|v_1 \rangle$ by the substitution $\zeta \rightarrow \zeta^k$, then the corresponding inequality (\ref{eq:corr-d-dim}) admits no quantum violation. As an example consider the inequality corresponding to the game matrix
\[\Phi_1 = \frac{1}{24}
\begin{bmatrix}
    i  & 2 & -2  & i \\
    2 & i & i & -2 \\
    -2 & i & i & 2 \\
    i  & -2 & 2 & i
\end{bmatrix}
\]
i.e. with $f(1,1) = f(2,2) = f(3,3) = f(4,4) = f(1,4) = f(2,3) = f(3,2) = f(4,1) = 1$, $f(1,2) = f(2,1) = f(3,4) = f(4,3) = 4$, $f(1,3) = f(2,4) = f(3,1) = f(4,2) = 2$ and similarly $q(1,2) = q(2,1) = q(3,4) = q(4,3) = q(1,3) = q(2,4) = q(3,1)=q(4,2) = \frac{1}{12}$ and the remaining eight probabilities all equal to $\frac{1}{24}$. The corresponding inequality has classical value $\beta_c = \frac{3}{4}$ and an optimization to level $\textbf{Q}_1\left[\textbf{B}(2; 4, (4,4,4,4); 4, (4,4,4,4))\right]$ shows that the quantum value is also equal to $\beta_q = \frac{3}{4}$. This is reflected in the maximum singular vectors of $\Phi_1$ being composed of powers of $i = \exp\left(2 \pi i / 4\right)$ only, and the corresponding condition being satisfied by the matrices $\Phi_2, \Phi_3$ as well. The fact that the sufficient condition is inherited from the level $\textbf{Q}_1\left[\textbf{B}(2; m_A, \vec{d}_A; m_B, \vec{d}_B)\right]$ in general \cite{Tsirelson} implies, by our central result, that none of the corresponding Bell inequalities with no quantum violation define facets of the Bell polytope. 

(iii) Furthermore, the fact that XOR games obey a perfect parallel repetition theorem \cite{CSUU07}, implies that from a given binary-outcome correlation Bell inequality with no quantum violation, one can construct several Bell inequalities in higher-dimensional multi-outcome Bell scenarios that also do not allow for quantum violation. Indeed, any parallel repetition of the non-local computation game from \cite{NLC} yields examples of $2^k$-output games without quantum advantage.   

(iv) Even considering Bell inequalities with marginal terms, it is possible to construct inequalities with no quantum violation. We give an illustrative example here in the simple $\textbf{B}\left((2,2),(2,2) \right)$ scenario, more involved scenarios require a careful construction using the NPA hierarchy. Consider the inequality parametrized by real $0 \leq \alpha \leq 1$ and given as
\begin{widetext}
\begin{eqnarray}
\left(P_{O_A, O_B | i_A , i_B}(0,0 | 0,0) + \alpha P_{O_A, O_B | i_A , i_B}(1,1 | 0,0) \right)-
P_{O_A, O_B | i_A , i_B}(0,1 | 0,1) - P_{O_A, O_B | i_A , i_B}(1,0 | 1,0) - P_{O_A, O_B | i_A , i_B}(0,0 | 1,1) \leq \alpha, \nonumber \\
\end{eqnarray}
\end{widetext}
where the classical maximum of $\alpha$ is readily obtained by direct inspection over local deterministic strategies. A well-known result using Jordan's lemma \cite{Jordan} states that the quantum maximum of inequalities in this Bell scenario is obtainable by performing projective measurements on two-qubit states. On the other hand, the no-signalling violation of the inequality is achieved by a Popescu-Rohrlich box \cite{PR94} which assigns value $1/2$ to the two terms in the bracket and $0$ to the remaining probabilities, to give the maximum no-signaling value of $\frac{1+\alpha}{2}$. A direct optimization over two-qubit states reveals that the inequality has the quantum value $\beta_q = \alpha < \frac{1+\alpha}{2}$ in the parameter range $0.867 \leq \alpha < 1$. Three affinely independent local deterministic strategies saturate the inequality, showing that the inequality defines a two-dimensional face of the set of quantum correlations (this is one less than the bound of $m_A + \frac{1}{2} m_A \left(m_A - 1 \right)$ derived in \cite{ECW20}. These are explicitly given as follows: (i) Alice outputs $(1,1)$ for her two inputs $i_A = 1,2$, Bob outputs $(1,0)$ for $i_B = 1,2$, (ii) Alice outputs $(1,1)$, Bob outputs $(1,1)$, (iii) Alice outputs $(1,0)$, Bob outputs $(1,1)$.  

\subsection{Facet Bell inequalities are violated in Almost Quantum theory}

Cabello, Severini and Winter discovered a relationship between Bell scenarios (that also extends to more general contextuality scenarios) and Graph theory \cite{CSW1, CSW2}. For a given two-party Bell scenario $\textbf{B}(\vec{d}_A, \vec{d}_B)$, one constructs an orthogonality graph $G_{\textbf{B}(\vec{d}_A, \vec{d}_B)}$ as follows. Each input-output combination $(o_A, o_B | i_A, i_B)$ corresponds to a distinct vertex $v_{(o_A, o_B | i_A, i_B)}$ of the graph, and two such vertices are connected by an edge if the corresponding events are locally orthogonal, where local orthogonality is the condition that distinct outcomes are obtained for the same local input. In other words, we have
\begin{widetext}
\begin{equation}
v_{(o_A, o_B | i_A, i_B)} \sim v_{(o'_A, o'_B | i'_A, i'_B)} \Leftrightarrow \left(i_A = i'_A \wedge o_A \neq o'_A\right) \lor \left(i_B = i'_B \wedge o_B \neq o'_B\right).
\end{equation} 
\end{widetext}
Equivalently, we may consider that each product measurement operator $E^{A}_{i_A, o_A} E^{B}_{i_B, o_B}$ corresponds to a vertex in the graph  $G_{\textbf{B}(\vec{d}_A, \vec{d}_B)}$ with vertices connected by an edge if $i_A = i'_A$ and $E^{A}_{i_A, o_A} E^{A}_{i'_A, o'_A} = 0$ or $i_B = i'_B$ and $E^{B}_{i_B, o_B} E^{B}_{i'_B, o'_B} = 0$. The number of vertices in the graph is  $\Big| V\left(G_{\textbf{B}(\vec{d}_A, \vec{d}_B)} \right) \Big| = n$.  

Furthermore, given a graph $G$ with vertex set $V(G)$ and edge set $E(G)$ one can also find a set of unit vectors obeying the above orthogonality conditions, called an orthonormal representation of the graph. Formally, an orthonormal representation of graph $G$ is a set $\{ | u_i \rangle \in \mathbb{R}^N: i \in V(G) \}$ where $N$ is some arbitrary dimension, $\| |u_i \rangle \| = 1$ for all $i \in V(G)$ and $\langle u_i | u_j \rangle = 0$ for $(i, j) \in E(G)$. It should be noted that in the graph-theoretic literature, the Lov\'{a}sz orthogonal representation is also defined in a complementary manner, with non-adjacent vertices being assigned orthogonal vectors. For a given graph $G$, the Lov\'{a}sz theta-body $\text{TH}(G)$ (sometimes also called the Grötschel-Lovász-Schrijver theta-body)  is a convex set introduced \cite{Lovasz-1, Lovasz-2, Schrijver} as a semi-definite programming relaxation to the hard graph-theoretic problem of finding a maximum weight stable set of the graph (a stable set is a set of mutually non-adjacent vertices). The theta set is defined as follows:
\begin{definition}
For a graph $G = \left(V(G), E(G) \right)$, define the convex set $TH(G)$ as
\begin{widetext}
\begin{equation}
\text{TH}(G) := \Bigg\{| \mathcal{P} \rangle = \left( |\langle \psi | u_i \rangle|^2 : i \in V(G) \right) \in \mathbb{R}_{+}^{V(G)} \Bigg| \begin{array}{l}
    \| | \psi \rangle \| = \| | u_i \rangle \| = 1, \\
    \{| u_i \rangle \} \; \text{is an orthonormal representation of G}
  \end{array}\Bigg\}
\end{equation}
\end{widetext}
\end{definition}

The similarity between the set $TH(G)$ and the set $\textbf{Q}_{1+AB}\left[\textbf{B}(2; m_A, \vec{d}_A; m_B, \vec{d}_B)\right]$ has been noted in the literature, here we give a self-contained proof that is more suited towards establishing our main result. Firstly, as shown in \cite{LO}, the normalization and no-signalling constraints on a box can be rewritten in terms of maximum clique equalities in the orthogonality graph, where a clique inequality is an inequality of the form 
\begin{equation}
\sum_{v_{(o_A, o_B | i_A, i_B)} \in c} P_{O_A, O_B | I_A, I_B}(o_A, o_B | i_A, i_B) \leq 1,
\end{equation}
for some clique $c$ in the graph. Here, a clique denotes a set of mutually adjacent vertices. Now, since by definition, each normalization constraint only considers events corresponding to different outcomes for the same measurement setting, it is clear that the normalization constraint corresponds to a clique inequality that is saturated. To see that the no-signaling condition also corresponds to such a constraint, note that using the normalization constraint, the no-signaling conditions can be rewritten in the form
\begin{widetext}
\begin{eqnarray}
\sum_{o_A = 1}^{d_{A, i_A}} P_{O_A, O_B | I_A, I_B}(o_A, o_B | i_A, i_B) + \sum_{\substack{o'_B = 1 \\ o'_B \neq o_B}}^{d_{B, i_B}} \sum_{o'_A = 1}^{d_{A, i'_A}} P_{O_A, O_B | I_A, I_B}(o'_A, o'_B | i'_A, i_B) &=& 1, \quad \text{for all} \; i_A, i'_A, o_B, i_B \nonumber \\
\sum_{o_B = 1}^{d_{B, i_B}} P_{O_A, O_B | I_A, I_B}(o_A, o_B | i_A, i_B) + \sum_{\substack{o'_A = 1 \\ o'_A \neq o_A}}^{d_{A, i_A}} \sum_{o'_B = 1}^{d_{B, i'_B}} P_{O_A, O_B | I_A, I_B}(o'_A, o'_B | i_A, i'_B) &=& 1, \quad \text{for all} \; i_B, i'_B, o_A, i_A.
\end{eqnarray}
\end{widetext}
Each no-signaling condition expressed in the above form considers events that are locally orthogonal, and thus corresponds to a saturated clique inequality. Furthermore, the normalization and no-signalling conditions correspond to \textit{maximum} clique inequalities, i.e., no other measurement event $(o_A, o_B | i_A, i_B)$ exists that is locally orthogonal to every event in these equations. Interestingly, it was shown in \cite{LO} that in any two-party Bell scenario $\textbf{B}(\vec{d}_A, \vec{d}_B)$, the normalization and no-signalling conditions encompass all the maximum clique inequalities, i.e., every maximum clique inequality corresponds to a normalization or a no-signalling constraint. On the other hand, when one considers Bell scenarios involving three or more parties, other maximum clique inequalities exist, and these are the constraints identified by the Local Orthogonality principle. 

In a formal sense, $\textbf{Q}_{1+AB}\left[\textbf{B}(2; m_A, \vec{d}_A; m_B, \vec{d}_B)\right]$ is equivalent to the set $TH(G)$ defined for an appropriate orthogonality graph $G_{\textbf{B}(\vec{d}_A, \vec{d}_B)}$, with the additional constraint that the maximum clique inequalities corresponding to the normalization and the no-signalling conditions be set to equalities. In other words, define $\mathcal{C}_{n, ns}$ as the set of maximum cliques in the orthgonality graph $G_{\textbf{B}(\vec{d}_A, \vec{d}_B)}$ that correspond to the normalization and no-signalling constraints in the Bell scenario. Define the convex set $\text{TH}_{n, ns}\left(G_{\textbf{B}(\vec{d}_A, \vec{d}_B)} \right)$ as
\begin{widetext}
\begin{equation}
\text{TH}_{n, ns}\left(G_{\textbf{B}(\vec{d}_A, \vec{d}_B)} \right) := \Bigg\{ | \mathcal{P} \rangle  = \left( |\langle \psi | u_i \rangle|^2 : i \in V\left(G_{\textbf{B}(\vec{d}_A, \vec{d}_B)}\right) \right) \in \mathbb{R}_{+}^n \Bigg| \begin{array}{l}
    \| | \psi \rangle \| = \| | u_i \rangle \| = 1 \; \forall i  \\
    \{| u_i \rangle \} \; \text{is an orth. repn. of} \; G_{\textbf{B}(\vec{d}_A, \vec{d}_B)}, \\
    \sum_{i \in c} |\langle \psi | u_i \rangle|^2 = 1, \; \; \; \; \text{for all} \; c \in \mathcal{C}_{n, ns}
  \end{array}\Bigg\}
\end{equation}
\end{widetext}
The set $\textbf{Q}_{1+AB}\left[\textbf{B}(2; m_A, \vec{d}_A; m_B, \vec{d}_B)\right]$ is then equivalent to $\text{TH}_{n, ns}\left(G_{\textbf{B}(\vec{d}_A, \vec{d}_B)} \right)$:
\begin{thm}[see \cite{Fritz2, our, CSW1}]
For any two-party Bell scenario $\textbf{B}(\vec{d}_A, \vec{d}_B)$, it holds that $\textbf{Q}_{1+AB}\left[\textbf{B}(\vec{d}_A, \vec{d}_B)\right] = \text{TH}_{n, ns}\left(G_{\textbf{B}(\vec{d}_A, \vec{d}_B)} \right)$. 
\end{thm}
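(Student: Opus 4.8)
The plan is to prove the two inclusions $\textbf{Q}_{1+AB}\left[\textbf{B}(\vec{d}_A, \vec{d}_B)\right] \subseteq \text{TH}_{n, ns}\left(G_{\textbf{B}(\vec{d}_A, \vec{d}_B)} \right)$ and $\text{TH}_{n, ns}\left(G_{\textbf{B}(\vec{d}_A, \vec{d}_B)} \right) \subseteq \textbf{Q}_{1+AB}\left[\textbf{B}(\vec{d}_A, \vec{d}_B)\right]$ by translating between the NPA certificate $\tilde\Gamma$ associated to the operator set $\tilde S = \{\mathds{1}\} \cup \{E^A_{i_A,o_A} E^B_{i_B,o_B}\}$ and an orthonormal representation of $G_{\textbf{B}(\vec{d}_A, \vec{d}_B)}$ together with a unit ``handle'' vector $|\psi\rangle$. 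The bridge in both directions is the spectral/Cholesky factorization of a positive semidefinite matrix: a PSD matrix $M$ with unit diagonal and rows/columns indexed by $\{\ast\} \cup V(G)$ can be written as a Gram matrix $M_{x,y} = \langle w_x | w_y\rangle$ of unit vectors, and the zero pattern of $M$ forces the corresponding orthogonality relations among the $|w_i\rangle$.

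For the forward inclusion, start from a box $|P\rangle \in \textbf{Q}_{1+AB}$ with its certificate $\tilde\Gamma \succeq 0$, $\tilde\Gamma_{\mathds 1,\mathds 1}=1$, and the NPA relation $\tilde\Gamma_{Q,R} = \tilde\Gamma_{S,T}$ whenever $Q^\dagger R = S^\dagger T$. First I would argue that the diagonal entries $\tilde\Gamma_{E^A_{i_A,o_A}E^B_{i_B,o_B},\,E^A_{i_A,o_A}E^B_{i_B,o_B}}$ all equal $1$ is \emph{not} automatic — instead one normalizes: factor $\tilde\Gamma = W^\dagger W$, let $|\psi\rangle$ be the column of $W$ indexed by $\mathds 1$, and let $|u_{(o_A,o_B|i_A,i_B)}\rangle$ be the normalized column indexed by $E^A_{i_A,o_A}E^B_{i_B,o_B}$ (these columns are nonzero because $|\langle\psi|u\rangle|^2 = P(\cdots)$ and, where a probability vanishes one can pick an arbitrary unit vector orthogonal to everything relevant, or observe such vertices can be handled separately). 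Then $|\langle\psi|u_i\rangle|^2 = \tilde\Gamma_{\mathds 1, i} = P_{O_A,O_B|I_A,I_B}(\cdots)$ by the NPA identity quoted in the preliminaries. The orthonormal-representation condition follows because if $v_i \sim v_j$ in $G_{\textbf{B}(\vec{d}_A,\vec{d}_B)}$, say $i_A = i'_A$ and $o_A \neq o'_A$, then the operator product $\left(E^A_{i_A,o_A}E^B_{i_B,o_B}\right)^\dagger \left(E^A_{i'_A,o'_A}E^B_{i'_B,o'_B}\right)$ contains $E^A_{i_A,o_A}E^A_{i_A,o'_A} = 0$ by orthogonality of Alice's projectors (using commutation to move it next to itself), so the corresponding $\tilde\Gamma$ entry is $0$, hence $\langle u_i | u_j\rangle = 0$. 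Finally, the maximum-clique equalities for $c \in \mathcal{C}_{n,ns}$ are exactly the normalization and no-signalling constraints, which hold for any box in the NPA set; in the Gram picture they say $\sum_{i\in c}|\langle\psi|u_i\rangle|^2 = 1$, i.e. that $P|\psi\rangle$ decomposes as a sum over a maximal mutually-orthogonal family — this is where I would invoke the earlier discussion that completeness $\sum_o E^A_{i_A,o} = \mathds 1$ plus commutation makes $\{E^A_{i_A,o_A}E^B_{i_B,o_B}\}_{o_A,o_B}$ resolve the identity.

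For the reverse inclusion, start from $|\mathcal P\rangle \in \text{TH}_{n,ns}\left(G_{\textbf{B}(\vec{d}_A,\vec{d}_B)}\right)$ with handle $|\psi\rangle$ and orthonormal representation $\{|u_i\rangle\}$, and build a certificate: set $\tilde\Gamma_{\mathds 1,\mathds 1} = 1$, $\tilde\Gamma_{\mathds 1, i} = \tilde\Gamma_{i,\mathds 1} = \langle\psi|u_i\rangle$ (choosing phases so this is real and nonnegative, WLOG by rotating each $|u_i\rangle$), and $\tilde\Gamma_{i,j} = \langle u_i|u_j\rangle$. This is manifestly the Gram matrix of $\{|\psi\rangle\} \cup \{|u_i\rangle\}$, hence PSD with the right diagonal, and $\tilde\Gamma_{\mathds 1,i} = |\langle\psi|u_i\rangle|^2$ only after one more normalization step — here I would instead directly define $|\psi'_i\rangle$ to be the projection of $|\psi\rangle$ and check the probabilities match; the clean way is to note $\text{TH}(G)$ already has the alternative characterization as $\{(\langle\psi|u_i\rangle^2)\}$ with the $|u_i\rangle$ real, so positivity of the entries is free. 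The technical heart is verifying the NPA consistency condition $\tilde\Gamma_{Q,R} = \tilde\Gamma_{S,T}$ whenever $Q^\dagger R = S^\dagger T$ for all $Q,R,S,T \in \tilde S$: because $\tilde S$ has only two ``levels'' ($\mathds 1$ and the length-one products $E^A E^B$), the words $Q^\dagger R$ that arise are limited, and each coincidence $Q^\dagger R = S^\dagger T$ must be shown to follow from (a) orthogonality of same-input projectors, (b) idempotence, and (c) commutation across parties — and in each such case the equality of Gram entries $\langle u_i|u_j\rangle = \langle u_k|u_l\rangle$ follows from the matching edge/non-edge structure of $G_{\textbf{B}(\vec{d}_A,\vec{d}_B)}$ plus the clique equalities. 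The clique equalities $\sum_{i\in c}\langle\psi|u_i\rangle^2 = 1$ for $c \in \mathcal{C}_{n,ns}$ are precisely what is needed to guarantee that the resulting box is normalized and no-signalling, i.e. lands inside the no-signalling polytope and not merely in $\text{TH}(G)$.

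The main obstacle I anticipate is the careful bookkeeping in the reverse direction: showing that the NPA relation $\tilde\Gamma_{Q,R} = \tilde\Gamma_{S,T}$ on \emph{every} coincidence $Q^\dagger R = S^\dagger T$ is implied by the edge structure of the orthogonality graph \emph{plus} the clique-equality constraints — in particular, that the clique equalities are not just necessary but exactly strong enough to pin down the entries $\tilde\Gamma_{E^A_{i_A,o_A}E^B_{i_B,o_B},\,E^A_{i_A,o'_A}E^B_{i_B,o'_B}}$ (same inputs, one or both outputs differing) and $\tilde\Gamma$ entries mixing different inputs, which is where the subtlety between $TH(G)$ and $\textbf{Q}_{1+AB}$ genuinely lives. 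I expect to handle this by exploiting the result quoted from \cite{LO} that in the bipartite case the maximum cliques are \emph{exactly} the normalization/no-signalling ones, so there are no ``extra'' clique constraints to worry about, and the dimension count $D$ from the preliminaries confirms the two sets have the same affine hull. This is essentially the argument of \cite{CSW1, Fritz2}, which I would cite for the routine parts while spelling out the bipartite specialization.
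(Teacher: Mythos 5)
Your proposal is correct in outline, and it is worth noting that the paper does not actually spell out a proof of this theorem: the statement is given with the citations \cite{Fritz2, our, CSW1}, and the only ``self-contained'' content the paper supplies is the preceding discussion showing that the normalization and no-signalling equalities are exactly the maximum-clique equalities of $G_{\textbf{B}(\vec{d}_A,\vec{d}_B)}$ (invoking \cite{LO} for the fact that in the bipartite case there are no other maximum cliques), which is what makes the constrained body $\text{TH}_{n,ns}$ the right object to compare with. Your Gram-matrix translation between the level-$(1+AB)$ certificate and an orthonormal representation with handle $|\psi\rangle$ is precisely the argument of the cited references, so your writeup is more complete than the paper's, and the points you flag as delicate (the zero-probability vertices, and the fact that the clique equalities are what upgrade $\text{TH}(G)$ membership to normalization and no-signalling) are the right ones. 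Two details to tighten. First, the paper's $\text{TH}$ is defined over real vectors while the NPA certificate is complex Hermitian; in the forward direction pass to $\mathrm{Re}(\tilde\Gamma)$, which is still positive semidefinite and preserves the diagonal, the $\mathds{1}$-row and the zero pattern, before factoring. Second, in the reverse direction the certificate should be taken directly as the Gram matrix of $\{|\psi\rangle\}\cup\{\langle u_i|\psi\rangle\,|u_i\rangle\}$, i.e.\ of the rescaled rather than unit vectors: with that choice $\tilde\Gamma_{\mathds{1},i}=\tilde\Gamma_{i,i}=|\langle\psi|u_i\rangle|^2$ holds exactly, the orthogonality pattern is inherited, and no further normalization step is needed; the remaining NPA coincidences $Q^{\dagger}R=S^{\dagger}T$ at this level are easily checked to be vacuous beyond the diagonal identity and the zero pattern, because a word $E^{A}_{a}E^{A}_{c}E^{B}_{b}E^{B}_{d}$ determines the pair $(Q,R)$ uniquely.
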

At this point, it is important to note the dimension mismatch between the sets $\textbf{Q}_{1+AB}\left[\textbf{B}(2; m_A, \vec{d}_A; m_B, \vec{d}_B)\right]$ and $TH(G)$. Namely, while $TH(G)$ is a full-dimensional convex set (of dimension $n$), $\textbf{Q}_{1+AB}\left[\textbf{B}(2; m_A, \vec{d}_A; m_B, \vec{d}_B)\right]$ is of much smaller dimension (being of dimension $D$). Therefore, one may wonder whether any statements about the facets of $TH(G)$ hold true for the smaller dimensional set, since facets of $\textbf{Q}_{1+AB}\left[\textbf{B}(2; m_A, \vec{d}_A; m_B, \vec{d}_B)\right]$ would be faces of much smaller dimension in $TH(G)$. Nevertheless, we use techniques used in the study of the facets of $TH(G)$ to show the following statement about the facets of the Almost Quantum set. 


\begin{thm}
\label{thm:main}
Every two-party facet Bell inequality, irrespective of the number of inputs and outputs for each party, admits a violation in almost quantum theory. In other words, for a facet Bell inequality of the form
 \begin{equation}
 \sum_{o_A,o_B,i_A,i_B}  q(i_A,i_B) V(o_A,o_B,i_A,i_B) P(o_A,o_B|i_A,i_B) \leq \omega_c 
 \end{equation}
where $\omega_c$ denotes the classical value of the inequality, the almost quantum value $\omega_{\tilde{q}}$ is strictly larger than $\omega_c$, i.e., $\omega_{\tilde{q}} > \omega_c$. 
\end{thm}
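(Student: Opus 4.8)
The plan is to argue by contradiction and reduce the statement to a purely graph-theoretic claim about the Lov\'asz theta body. Suppose some facet Bell inequality, written $B\cdot|P\rangle\le\omega_c$, is \emph{not} violated in almost quantum theory, i.e.\ it is a valid inequality for $\textbf{Q}_{1+AB}$. From the inclusions $\textbf{C}\subseteq\textbf{Q}_{1+AB}\subseteq\textbf{NS}$, the fact that all three sets have the same dimension $D$, and the fact that a facet Bell inequality is tight on a $(D-1)$-dimensional subset of $\textbf{C}$, the inequality would then be facet-defining for $\textbf{Q}_{1+AB}$ as well (the corresponding face of $\textbf{Q}_{1+AB}$ has dimension $\ge D-1$ and is proper, hence exactly $D-1$). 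So it is enough to show that $\textbf{C}$ and $\textbf{Q}_{1+AB}$ share no facet other than the trivial non-negativity facets $P(o_A,o_B|i_A,i_B)\ge0$, which are also facets of $\textbf{NS}$ and are therefore not facet Bell inequalities.

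Next I would pass to the orthogonality graph $G=G_{\textbf{B}(\vec d_A,\vec d_B)}$. By the theorem identifying $\textbf{Q}_{1+AB}$ with $\text{TH}_{n,ns}(G)$, we have $\textbf{Q}_{1+AB}=\text{TH}(G)\cap\mathcal A$, where $\mathcal A$ is the affine subspace cut out by the maximum-clique equalities $\sum_{i\in c}x_i=1$, $c\in\mathcal C_{n,ns}$; since by \cite{LO} these exhaust \emph{all} maximum cliques of $G$, each such hyperplane supports $\text{TH}(G)$, so $\textbf{Q}_{1+AB}$ is itself a (low-dimensional) face of $\text{TH}(G)$. On the classical side, a local deterministic box is exactly the indicator vector of a stable set of $G$ meeting every maximum clique precisely once, and $\mathcal A$ is the affine hull of these vectors, so $\textbf{C}=\text{STAB}(G)\cap\mathcal A$. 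The claim to prove thus becomes: the polytope $\text{STAB}(G)\cap\mathcal A$ and the convex body $\text{TH}(G)\cap\mathcal A$ share only non-negativity facets.

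The heart of the argument is a ``sliced'' analogue of the Gr\"otschel--Lov\'asz--Schrijver phenomenon that $\text{TH}(G)$ strictly overshoots every facet of $\text{STAB}(G)$ other than the clique and non-negativity facets (prototype: Lov\'asz's $\vartheta(C_5)=\sqrt5>2$). Let $F$ be a putative common facet that is not a non-negativity facet, and pick $p$ in its relative interior. Since $F$ is a facet of $\textbf{C}$ distinct from every non-negativity facet, $p$ has \emph{all} coordinates strictly positive; writing $p=\bigl(\,|\langle\psi|u_i\rangle|^2\,\bigr)_i$ for an orthonormal representation $\{u_i\}$ of $G$ with handle $\psi$, this forces $\langle\psi|u_i\rangle\neq0$ for every vertex $i$, and the maximum-clique equalities force $\psi$ into the span of $\{u_i:i\in c\}$ for each maximum clique $c$. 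I would then perturb $(\psi,\{u_i\})$ within the manifold of configurations that preserve each of these maximum-clique spans — so that the associated box stays in $\textbf{Q}_{1+AB}$ — and show that such perturbations sweep out an analytic arc through $p$ inside $\textbf{Q}_{1+AB}$ that leaves the affine hull of $F$, contradicting the fact that, $F$ being a \emph{flat} facet of $\textbf{Q}_{1+AB}$, a one-sided neighbourhood of $p$ in $\textbf{Q}_{1+AB}$ coincides with a neighbourhood of $p$ in $\text{aff}(F)$. (Equivalently one may run a rank count on the level-$(1+AB)$ semidefinite certificate $\Gamma$ along $F$: being a facet forces a corank on the generic certificate over $F$ that is incompatible with the coranks forced at the local deterministic vertices lying on $F$, in the spirit of the extreme-point/rank arguments of \cite{ECW20}.) It follows that $F$ can only be a clique or non-negativity facet of $\text{STAB}(G)$ cut by $\mathcal A$; but a maximum-clique inequality becomes an equality on $\mathcal A$, while a non-maximum clique inequality becomes either a non-negativity facet or a face of codimension $\ge2$, so $F$ is a non-negativity facet after all — contradiction. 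Hence every facet Bell inequality has $\omega_{\tilde q}>\omega_c$.

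The step I expect to be the main obstacle is exactly this ``slicing''. As the excerpt already flags, $\textbf{Q}_{1+AB}$ is a face of $\text{TH}(G)$ of dimension $D$, far smaller than $\dim\text{TH}(G)$, so a facet of $\textbf{Q}_{1+AB}$ is only a very low-dimensional face of $\text{TH}(G)$ and one cannot simply invoke the facet classification of $\text{TH}(G)$. The real work is to show that the orthonormal-representation curvature responsible for $\text{TH}(G)$ overshooting its nontrivial $\text{STAB}(G)$-facets genuinely survives intersection with the affine space $\mathcal A$ of maximum-clique equalities — i.e.\ that through a generic relative-boundary point of $\textbf{Q}_{1+AB}$ there is a curved arc of almost quantum boxes escaping the affine hull of any candidate facet — and in particular to rule out the degenerate scenario in which every such transverse direction of $\text{TH}(G)$ is annihilated by the clique equalities.
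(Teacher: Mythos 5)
Your reduction is sound and matches the paper's (implicit) first step: if a facet Bell inequality were valid for $\textbf{Q}_{1+AB}$, the dimension count forces it to be a facet of $\textbf{Q}_{1+AB}$ as well, so it suffices to show that $\textbf{Q}_{1+AB}=\text{TH}_{n,ns}\left(G_{\textbf{B}(\vec{d}_A,\vec{d}_B)}\right)$ has no nontrivial facets beyond non-negativity and the clique (normalization/no-signalling) equalities. The problem is that your argument stops exactly where the proof has to start. The ``sliced Gr\"otschel--Lov\'asz--Schrijver phenomenon'' you invoke --- the existence, through a relative-interior point $p$ of any putative nontrivial common facet $F$, of an arc of almost-quantum boxes that preserves all maximum-clique equalities yet leaves $\text{aff}(F)$ --- is precisely the content of the theorem, and you explicitly concede you do not establish it. Neither the perturbation of $(\psi,\{u_i\})$ within the ``clique-span-preserving manifold'' nor the parenthetical rank count on the certificate $\Gamma$ is carried out, and it is not obvious that such perturbations yield a direction transverse to $F$ rather than tangent to it. As written, the proposal is a correct reformulation of the statement plus an unproven key lemma.

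The paper closes this gap by a dual route you do not consider: it uses the antiblocker-type characterization of the theta body (Knuth, Schrijver), under which every valid inequality for $\text{TH}_{n,ns}(G)$ --- hence every facet --- has the form $\sum_i|\langle\phi|w_i\rangle|^2|\mathcal{P}\rangle_i\le 1$ for an orthonormal representation $\{|w_i\rangle\}$ of the \emph{complement} graph $\overline{G}$ and a unit handle $|\phi\rangle$. Taking a relative-interior point $|\mathcal{P}^*\rangle$ of the facet, saturation forces $|\phi\rangle$ to be an eigenvector of $\sum_i|\mathcal{P}^*\rangle_i\,|w_i\rangle\langle w_i|$ with eigenvalue $1$, and a rigidity argument (using that a facet cannot decompose into other inequalities of the family) shows each $|w_i\rangle$ is either orthogonal to $|\phi\rangle$ or equal to $\pm|\phi\rangle$; the facet therefore collapses to a clique inequality of $G$, and by the local-orthogonality result of \cite{LO} these are exhausted in the bipartite case by normalization and no-signalling. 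If you want to salvage the primal perturbation idea you must actually prove the transversality lemma you flagged; otherwise the dual/eigenvector route is the one that closes the argument.
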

 
 \begin{proof}
The proof follows analogously to that of an analogous claim made for the general Lov\'{a}sz theta set $\text{TH}(G)$. It is noteworthy that the set $\text{TH}(G)$ has been characterized in multiple ways in the literature. We begin with a complementary characterization of the set $\text{TH}_{n, ns}\left(G_{\textbf{B}(\vec{d}_A, \vec{d}_B)} \right)$ inherited from a characterization of $\text{TH}(G)$ \cite{Knuth, Schrijver2} that is particularly suited to our problem. 
\begin{widetext}
\begin{equation}
\label{eq:TH-nns-def}
\text{TH}_{n,ns}\left(G_{\textbf{B}(\vec{d}_A, \vec{d}_B)} \right) = \left\{  | \mathcal{P} \rangle \in \mathbb{R}_{+}^n  \Bigg\vert \;\; \begin{array}{l}
\sum_{i \in V\left(G_{\textbf{B}(\vec{d}_A, \vec{d}_B)}\right)} |\langle \phi | w_i \rangle|^2 | \mathcal{P}\rangle_i \leq 1, \\
\{ | w_i \rangle\} \; \text{is an orth. repn. of} \; \overline{G}_{\textbf{B}(\vec{d}_A, \vec{d}_B)}, \\
\| | \phi \rangle \| = \| | w_i \rangle \| = 1 \; \forall i \\
\sum_{i \in c} | \mathcal{P} \rangle_i = 1, \; \; \; \; \text{for all} \; c \in \mathcal{C}_{n, ns}
\end{array}\right\}
\end{equation}
\end{widetext}
Here $\overline{G}$ denotes the graph complement of $G$, i.e., the graph with the same vertex set as $G$ and the complementary edge set ($u \sim v$ in $\overline{G}$ $\Leftrightarrow$ $u \not\sim v$ in $G$). This representation of $\text{TH}_{n,ns}\left(G_{\textbf{B}(\vec{d}_A, \vec{d}_B)} \right) = \textbf{Q}_{1+AB}\left[\textbf{B}(2; m_A, \vec{d}_A; m_B, \vec{d}_B)\right]$ is useful since it characterizes the facets of the set, in particular every facet is of the form $\sum_{i \in V\left(G_{\textbf{B}(\vec{d}_A, \vec{d}_B)}\right)} |\langle \phi | w_i \rangle|^2 | \mathcal{P}\rangle_i = 1$ for some unit vector $|\phi \rangle \in \mathbb{R}^N$ and orthonormal representation $\{ |w_i \rangle \in \mathbb{R}^N \}$ of $\overline{G}$. It is also worth noting that the normalization and no-signalling constraints $\sum_{i \in c} | \mathcal{P} \rangle_i = 1$ also fall in this category, if we choose $|w_i \rangle = | \phi \rangle$ for every vertex $i \in c$ and $|w_i \rangle = | \phi \rangle^{\perp}$, for some arbitrary unit vector $| \phi \rangle$ and an orthogonal unit vector $| \phi \rangle^{\perp} \perp | \phi \rangle$. 

Let $F = \{ | P \rangle \big| \sum_{i  \in V\left(G_{\textbf{B}(\vec{d}_A, \vec{d}_B)}\right)} |\langle \phi | w_i \rangle|^2 | \mathcal{P}\rangle_i = 1\}$ be a facet of $\textbf{Q}_{1+AB}\left[\textbf{B}(2; m_A, \vec{d}_A; m_B, \vec{d}_B)\right]$. Let $| \mathcal{P}^* \rangle \in \text{int}(F)$. We have the following
\begin{eqnarray}
\label{eq:facet-der-1}
\sum_{i  \in V\left(G_{\textbf{B}(\vec{d}_A, \vec{d}_B)}\right)} |\langle \phi | w_i \rangle|^2 | \mathcal{P}^* \rangle_i &\leq& 1 \nonumber \\
\implies \sum_{i  \in V\left(G_{\textbf{B}(\vec{d}_A, \vec{d}_B)}\right)} |\langle \phi | w_i \rangle|^2 | \mathcal{P}^* \rangle_i &\leq& \langle \phi | \phi \rangle \nonumber \\
\implies \langle \phi | \left( \sum_{i  \in V\left(G_{\textbf{B}(\vec{d}_A, \vec{d}_B)}\right)}  | \mathcal{P}^* \rangle_i | w_i \rangle \langle w_i | \right) | \phi \rangle &\leq& 1
\end{eqnarray}
Saturation of the above inequality implies that $| \phi \rangle$ is a maximum eigenvector of $\left( \sum_{i  \in V\left(G_{\textbf{B}(\vec{d}_A, \vec{d}_B)}\right)}  | \mathcal{P}^* \rangle_i | w_i \rangle \langle w_i | \right)$ corresponding to eigenvalue $1$.  In other words
\begin{widetext}
\begin{eqnarray}
\left( \sum_{i  \in V\left(G_{\textbf{B}(\vec{d}_A, \vec{d}_B)}\right)}  | \mathcal{P}^* \rangle_i | w_i \rangle \langle w_i | \right) | \phi \rangle &=& | \phi \rangle \nonumber \\
\left( \sum_{i  \in V\left(G_{\textbf{B}(\vec{d}_A, \vec{d}_B)}\right)}  | \mathcal{P}^* \rangle_i \langle w_i | \phi \rangle \right)  | w_i \rangle_{j} &=& | \phi \rangle_{j} \; \; \text{for} \; j = 1, \dots, N \nonumber \\
\left( \sum_{i  \in V\left(G_{\textbf{B}(\vec{d}_A, \vec{d}_B)}\right)}  | \mathcal{P}^* \rangle_i \langle w_i | \phi \rangle \right)  | w_i \rangle_{j} &=& \left( \sum_{i  \in V\left(G_{\textbf{B}(\vec{d}_A, \vec{d}_B)}\right)} | \mathcal{P}^* \rangle_i |\langle w_i | \phi \rangle|^2   \right) | \phi \rangle_{j} \; \; \text{for} \; j = 1, \dots, N,
\end{eqnarray}
\end{widetext}
where in obtaining the last equation we have used the first inequality of (\ref{eq:facet-der-1}). Now as $F$ is a facet, it is not a convex combination of other facet-defining inequalities in Eq.(\ref{eq:TH-nns-def}), so that the above equality implies
\begin{eqnarray}
\langle w_i | \phi \rangle  | w_i \rangle &=& |\langle w_i | \phi \rangle|^2 |\phi \rangle \; \; \forall i \in V\left(G_{\textbf{B}(\vec{d}_A, \vec{d}_B)}\right).
\end{eqnarray}
This gives that for each $i \in V\left(G_{\textbf{B}(\vec{d}_A, \vec{d}_B)}\right)$, either $\langle w_i | \phi \rangle = 0$ or 
\begin{eqnarray}
| w_i \rangle &=& \langle w_i | \phi \rangle |\phi \rangle \nonumber \\
\implies |w_i \rangle &=& \pm |\phi \rangle. 
\end{eqnarray}
Without loss of generality we may take $|w_i \rangle = |\phi \rangle$. Therefore, for every $i \in V\left(G_{\textbf{B}(\vec{d}_A, \vec{d}_B)}\right)$, either $\langle w_i | \phi \rangle = 0$ or $|w_i \rangle = | \phi \rangle$. Defining the set $\mathcal{I}$ as $\mathcal{I} := \{i \in V\left(G_{\textbf{B}(\vec{d}_A, \vec{d}_B)}\right) \Big| \; \;  |w_i \rangle = |\phi \rangle \}$, we obtain that $\{ | w_i \rangle \}$ is an orthonormal representation of $\overline{G}$ where $| w_i \rangle$ takes value $| \phi \rangle$ for every $i \in \mathcal{I}$, while $| w_i \rangle$ belongs to the a subspace of $\mathbb{R}^N$ that is orthogonal to $| \phi \rangle$ when $i \notin \mathcal{I}$. This therefore implies that all the vertices in $\mathcal{I}$ are mutually non-adjacent in $\overline{G}$, i.e., that $\mathcal{I}$ is a stable set of $\overline{G}$ or in other words $\mathcal{I}$ is a clique of $G$. The inequality $\sum_{i  \in V\left(G_{\textbf{B}(\vec{d}_A, \vec{d}_B)}\right)} |\langle \phi | w_i \rangle|^2 | \mathcal{P}\rangle_i = 1$ supporting facet $F$ of $\textbf{Q}_{1+AB}\left[\textbf{B}(2; m_A, \vec{d}_A; m_B, \vec{d}_B)\right]$ is thus a clique inequality of the form $\sum_{i \in \mathcal{I}} | \mathcal{P} \rangle_i = 1$. Now from \cite{LO}, we know that in any two-party Bell scenario, the clique inequalities are exhausted by the no-signalling and normalization constraints. Therefore, no other proper facet-defining Bell inequality exists that is also a facet of $\textbf{Q}_{1+AB}\left[\textbf{B}(2; m_A, \vec{d}_A; m_B, \vec{d}_B)\right]$, i.e., every two-party facet Bell inequality is violated in Almost Quantum theory.  
\end{proof}
 
It is worth noting that an alternative route to deriving the result in Thm. (\ref{thm:main}) is to parametrize box $| \mathcal{P} \rangle$ in terms of $D$ parameters, being probabilities $P_{O_A, O_B | I_A, I_B}(o_A, o_B | i_A, i_B)$, from which other probabilities that define the box can be obtained via normalization and no-signalling conditions, as done for example in \cite{NPA}. The result on facets of $TH(G)$ from \cite{Knuth, Schrijver2} applied to the theta-set of the orthogonality graph of this subset of events in the Bell experiment, can then be used to derive Thm. (\ref{thm:main}).  
 

\begin{corollary}
Any two-party facet Bell inequality, irrespective of the number of inputs and outputs for each party, for which the quantum value is achieved at level $1+AB$ of the NPA hierarchy, admits a violation in quantum theory. In particular, two-party XOR games that define facet Bell inequalities always admit a quantum advantage. 
\end{corollary}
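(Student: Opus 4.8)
The plan is to obtain the corollary as an immediate consequence of Theorem~\ref{thm:main}, once one recalls that the Almost Quantum set $\textbf{Q}_{1+AB}$ is literally a level of the NPA hierarchy. First I would record the standard nesting of the NPA relaxations, which yields for any Bell functional the chain of optimal values $\omega_c \le \omega_q \le \omega_{1+AB} = \omega_{\tilde q}$, where $\omega_{1+AB}$ is the optimum over $\textbf{Q}_{1+AB}$ and equals $\omega_{\tilde q}$ by definition of the Almost Quantum set. The hypothesis ``the quantum value is achieved at level $1+AB$'' means precisely that the middle inequality is an equality, $\omega_q = \omega_{1+AB} = \omega_{\tilde q}$. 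Theorem~\ref{thm:main} supplies $\omega_{\tilde q} > \omega_c$ for every two-party facet Bell inequality, and substituting gives $\omega_q = \omega_{\tilde q} > \omega_c$, i.e.\ a strict quantum violation.

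For the statement about XOR games I would invoke Tsirelson's semidefinite characterization of the quantum value of two-party correlation Bell inequalities \cite{Tsirelson}: for any such (binary-outcome, correlation-only) functional the quantum optimum is already attained at level $1$ of the NPA hierarchy, hence a fortiori at the at-least-as-tight intermediate level $1+AB$ (one has $\textbf{Q} \subseteq \textbf{Q}_{1+AB} \subseteq \textbf{Q}_1$, since single-party operators are recovered as marginal sums of the products $E^{A}_{i_A,o_A}E^{B}_{i_B,o_B}$, so the three optima coincide for XOR functionals). Thus every XOR game automatically satisfies the hypothesis of the first part of the corollary; if it additionally defines a facet of the classical Bell polytope, we conclude $\omega_q = \omega_{\tilde q} > \omega_c$, i.e.\ it must admit a quantum advantage.

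There is essentially no obstacle here; the only point requiring care is the bookkeeping about the directions of the inclusions in the hierarchy and the reading of ``achieved at level $1+AB$'' as the attainment $\omega_q = \omega_{1+AB}$ (rather than merely finiteness of the $\textbf{Q}_{1+AB}$ optimum), which is what lets the strict gap in Theorem~\ref{thm:main} transfer verbatim to $\omega_q$. For completeness one might also note that the same one-line argument applies to any other Bell functional known to be SDP-exact at level $1+AB$, for instance the multi-outcome correlation inequalities of Eq.~(\ref{eq:corr-d-dim}) and their parallel repetitions discussed above, reproving that none of the no-quantum-violation examples constructed there can be facets of the classical Bell polytope.
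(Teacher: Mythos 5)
Your argument is correct and is essentially the paper's own proof, just written out in more detail: both reduce the corollary to Theorem~\ref{thm:main} via the observation that ``quantum value achieved at level $1+AB$'' means $\omega_q=\omega_{\tilde q}>\omega_c$, and both handle the XOR-game case by citing Tsirelson's result that the quantum value of a correlation Bell inequality is already attained at level $1$ (hence at the tighter level $1+AB$). Your explicit bookkeeping of the inclusions $\textbf{Q}\subseteq\textbf{Q}_{1+AB}\subseteq\textbf{Q}_1$ is a correct and slightly more careful rendering of what the paper leaves implicit.
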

\begin{proof}
Two-party XOR games form a class of Bell inequalities for which the quantum value is achieved at level $1+AB$, in fact already at level $1$ of the NPA hierarchy, by the results of Tsirelson \cite{Tsirelson}. Therefore, binary-outcome correlation Bell inequalities that do not admit quantum violation, do not define facets of the classical Bell polytope. 
\end{proof}

This Corollary neatly recovers the result by Escol\'{a} et al. Besides, as we have seen it also recovers a central result of \cite{RQSMA17}, namely that $d$-outcome non-local computation games do not define facets of the Bell polytope. Finally, other unique games with no quantum advantage considered in \cite{RQSMA17} are also shown to not correspond to facet-defining Bell inequalities. An important question remains, whether the methods can be extended to other levels of the NPA hierarchy, to identify whether any two-party facet Bell inequality also defines a facet of the set of quantum correlations. Such a facet, if it exists, would provide a fundamental information-theoretic principle, to identify why nature chose Quantum theory over Almost Quantum theory \cite{AQ}. 

\subsection{All two-party facet-defining inequalities of the Bell correlation polytope are violated in Quantum theory}
Avis et al. \cite{AII06} posed the question whether there are any facets of the binary-outcome correlation polytope (the classical polytope of two-party binary-outcome correlations $\langle A_{i_A} B_{i_B} \rangle \in \{+1,-1\}$ , excluding the local marginal terms) that are not violated in Quantum theory. This question was recently answered in the negative by Escol\'{a} et al. \cite{ECW20}, making use of the simple characterization of XOR games with no quantum advantage given in \cite{RKMH14}. Here, we provide an alternative proof, making use of a connection between the correlation polytope and the Cut polytope of graph theory \cite{AII06}. In particular, this connection links the set of binary-outcome quantum correlations and the well-studied elliptope in graph theory, this latter body being the semi-definite programming relaxation of the Cut polytope. 

First, we introduce the cut polytope of complete graph. The graph is denoted by $\Gamma_t$, has $t$ vertices, and has an edge between each pair of vertices. A cut $S$ is an assignment of $\{0,1\}$ to each vertex in the graph. The cut vector $\delta(S)$ for some cut $S$ is given by $\delta_{u,v}(S) = 1$ if vertices $u, v$ are assigned different values, and $0$ if the vertices are assigned the same values. The set of all convex combinations of cut vectors $\text{CUT}(\Gamma_t) = \{\sum_{S: cut} p_S \delta(S) | \sum_{S: cut} p_S = 1, \; p_S \geq 0\}$ is called the Cut polytope of the complete graph. The vectors of correlation functions which are possible in classical correlation experiments form the cut polytope $\text{Cut}\left(\Gamma_{m_A,m_B}\right)$ of the complete bipartite graph $\Gamma_{m_A,m_B}$. Tight correlation inequalities are exactly the facet-inducing inequalities of the polytope $Cut\left(\Gamma_{m_A,m_B} \right)$.

The semi-definite relaxation of the cut polytope of the complete graph $\Gamma_t$ is the elliptope $\mathcal{E}(\Gamma_t)$ also sometimes denoted as $\mathcal{E}_{t \times t}$. Formally, $\mathcal{E}_{t \times t}$ denotes the set of $t \times t$ correlation matrices (positive semidefinite matrices with diagonal entries equal to $1$)
\begin{equation}
\mathcal{E}_{t \times t} := \left\{ M \in \mathbb{R}^{t \times t} \Big| M \succeq 0, M_{i,i} = 1 \; \text{for all} \; i = 1, \dots, t \right\}.
\end{equation}
In general, the elliptope $\mathcal{E}(G)$ of a graph $G = (V, E)$ with  $|V |$ vertices is the convex body consisting of vectors $\vec{x} \in \mathbb{R}^{E}$ such that there exist a unit vector $|u_i \rangle \in \mathbb{R}^{|V|}$ for each vertex $i \in V$ satisfying $\vec{x}_{i,j} = \langle u_i | u_j \rangle$. In particular, the elliptope of the complete bipartite graph $\mathcal{E}(\Gamma_{m_A, m_B})$ is the set of vectors $\vec{x} \in \mathbb{R}^{E(\Gamma_{m_A, m_B})}$ satisfying the conditions of Tsirelson's theorem, so that $\mathcal{E}\left(\Gamma_{m_A, m_B} \right)$ is the set of bipartite binary-outcome quantum correlations \cite{AII06}. The dimensionalities of these sets is $\text{dim}\left(\mathcal{E}_{t \times t} \right) = {t \choose 2}$, and $\text{dim}\left(\mathcal{E}\left(\Gamma_{m_A, m_B} \right) \right) = m_A m_B$. Moreover, $\mathcal{E}\left(\Gamma_{m_A, m_B} \right)$ is a projection of $\mathcal{E}_{t \times t}$ onto the lower-dimensional space for $t = m_A + m_B$.  


We now show that every two-party facet-defining correlation Bell inequality, irrespective of the number of inputs and outputs for each party, admits a violation in quantum theory. In other words, for a facet-defining correlation Bell inequality of the form
 \begin{equation}
 \sum_{i_A, i_B} \alpha_{i_A, i_B} \langle A_{i_A} B_{i_B} \rangle \leq \beta_c 
 \end{equation}
where $\beta_c$ denotes the classical value of the inequality, the quantum value $\beta_{q}$ is strictly larger than $\beta_c$.

The proof comes from the discussion above, mapping the Cut polytope and the correlation Bell polytope, along with the corresponding mapping between the Elliptope and the set of two-party binary-outcome quantum correlations \cite{AII06}. Laurent and Poljak in \cite{LP96}, building upon the results of \cite{LT94} show that the largest dimension of a polyhedral face (formed by the convex hull of cut vectors) of $\mathcal{E}_{t \times t}$ is equal to the largest integer $d_t$ such that ${d_t+1 \choose 2} \leq t-1$, i.e., $d_t = \lfloor \frac{\sqrt{8t-7} -1}{2} \rfloor$. They further show that the largest dimension of any face of $\mathcal{E}_{t \times t}$ is ${t-1 \choose 2}$.  A facet-defining correlation Bell inequality is, by definition, of dimension $\text{dim}\left(\mathcal{E}\left(\Gamma_{m_A, m_B} \right) \right) - 1 = m_A m_B - 1$. For $t := m_A + m_B$, it is readily seen that for all values of $m_A, m_B \geq 2$, $m_A m_B - 1 \geq d_t$. The mapping also allows to derive novel Quantum Bell inequalities, which may be of interest in self-testing applications, as we show in the next section.  

\subsection{Quantum Bell Inequalities}

The characterization Eq.(\ref{eq:TH-nns-def}) of $\textbf{Q}_{1+AB}\left[\textbf{B}(\vec{d}_A, \vec{d}_B)\right] = \text{TH}_{n, ns}\left(G_{\textbf{B}(\vec{d}_A, \vec{d}_B)} \right)$ used in the proof of Theorem \ref{thm:main} is particularly useful in deriving novel (Almost) Quantum Bell inequalities, that provide candidate Quantum Bell inequalities for quantum self-testing applications \cite{Scarani}. In particular, Eq.(\ref{eq:TH-nns-def}) defines the boundary of the Almost Quantum set in terms of the linear inequalities
\begin{eqnarray}
\label{eq:AQ-boundary}
\sum_{i \in V\left(G_{\textbf{B}(\vec{d}_A, \vec{d}_B)}\right)} |\langle \phi | w_i \rangle|^2 | \mathcal{P}\rangle_i \leq 1, 
\end{eqnarray}
where $\{ | w_i \rangle \in \mathbb{R}^N\}$ is an orthonormal representation of the complement graph $\overline{G}_{\textbf{B}(\vec{d}_A, \vec{d}_B)}$ and $|\phi \rangle \in \mathbb{R}^N$ is an arbitrary dimensional unit vector. Boxes satisfying (\ref{eq:AQ-boundary}) with equality define a linear boundary, that may be investigated to check if any quantum boxes living on the boundary admit a self-testing quantum realization \cite{Scarani}. As an example, we investigate the CHSH Bell scenario $\textbf{B}\left((2,2), (2,2)\right)$, where $\Big| V\left(G_{\textbf{B}\left((2,2), (2,2)\right)}\right)\Big| = 16$. The boundary of the set of quantum correlations (excluding the local marginal terms) in this scenario was characterized by Tsirelson \cite{Tsirelson} to be
\begin{eqnarray}
\label{eq:QBell-chsh}
\sum_{(x,y) \neq (i,j)} \arcsin\left(\langle A_x B_y \rangle\right) - \arcsin\left(\langle A_i B_j \rangle\right) = \xi \pi,
\end{eqnarray}  
where $i,j \in \{1,2\}$, $\xi = \pm 1$. Quantum correlations that satisfy the above condition were shown to self-test the two-qubit singlet state in \cite{Scarani}. 

On the other hand, linear inequalities of the form (\ref{eq:AQ-boundary}) bound the entire $8$-dimensional Almost Quantum boundary including the marginal terms. One such boundary is given by the canonical orthonormal representation of the graph complement of the $8$-vertex graph describing the events occurring in the CHSH inequality as follows \cite{Cabello}:
\begin{eqnarray}
\langle \phi | &=& \left\{ \sqrt{1-\frac{1}{\sqrt{2}}}, \sqrt{1-\frac{1}{\sqrt{2}}}, \sqrt{1-\frac{1}{\sqrt{2}}}, \sqrt{\frac{3}{\sqrt{2}} - 2}, 0 \right\} , \nonumber \\  
\langle w_1 | &=& \left\{1,0,0,0,0 \right\}, \nonumber \\
\langle w_2 | &=& \left\{0,1,0,0,0 \right\}, \nonumber \\
\langle w_3 | &=& \left\{0,0,1,0,0 \right\}, \nonumber \\
\langle w_4 | &=& \left\{2-\sqrt{2},0,0,\sqrt{\sqrt{2}-1},-\sqrt{3\sqrt{2} - 4}\right\}, \nonumber \\
\langle w_5 | &=& \left\{3-2\sqrt{2},2-\sqrt{2},0,\sqrt{2\left(5\sqrt{2}-7 \right)},\sqrt{6\sqrt{2} - 8}\right\}, \nonumber \\
\langle w_6 | &=& \left\{2-\sqrt{2},3-2\sqrt{2},2-\sqrt{2},-2\sqrt{5\sqrt{2}-7},0\right\}, \nonumber \\
\langle w_7 | &=& \left\{0,-2+\sqrt{2},2\sqrt{2}-3,-\sqrt{2\left(5\sqrt{2}-7 \right)},\sqrt{6\sqrt{2} - 8}\right\}, \nonumber \\
\langle w_8 | &=& \left\{0,0,-2+\sqrt{2},-\sqrt{\sqrt{2}-1},-\sqrt{3\sqrt{2}-4} \right\}, \nonumber \\
\end{eqnarray}
with $| w_9 \rangle, \dots, |w_{16} \rangle$ belonging to a subspace orthogonal to $| \phi \rangle$, so that $|\langle \phi | w_i \rangle|^2 = 0$ for $i = 9, \dots, 16$. Moreover, $|\langle \phi | w_i \rangle|^2 = 1- \frac{1}{\sqrt{2}}$ for $i = 1, \dots, 8$. The quantum box leading to maximal violation of the CHSH inequality with $|\mathcal{P} \rangle_i = \frac{2+\sqrt{2}}{8}$ for $i = 1, \dots, 8$ lives on this boundary, and is well-known to self-test the two-qubit singlet state.  

Similarly, the connection between the elliptope and the set of quantum correlations is also useful in deriving novel Quantum Bell inequalities. In particular, the boundary of the quantum correlation set in the CHSH scenario can be generalized to the binary-outcome Bell scenario with $m_A = m_B = m$ inputs on each side. The CHSH Bell inequality directly generalizes to the Braunstein-Caves chain Bell inequalities in this scenario \cite{BC90}. One can parametrize $\langle A_1 B_1 \rangle = \cos\left(\theta_1\right)$, $\langle A_2 B_1 \rangle = \cos\left(\theta_2\right)$, $\langle A_2 B_2 \rangle = \cos\left(\theta_3\right)$, $\langle A_3 B_2 \rangle = \cos\left(\theta_4\right)$, $\dots$, $\langle A_m B_{m} \rangle = \cos\left(\theta_{2m-1}\right)$, $\langle A_1 B_m \rangle = \cos\left(\theta_{2m} \right)$ with $0 \leq \theta_1, \dots, \theta_{2m} \leq \pi$ and where at most one of $\theta_1, \dots, \theta_{2n}$ is greater than $\pi/2$ (yielding the negative sign in the corresponding chain inequality). One can then analogously derive the following boundary of the quantum correlation set in this scenario \cite{BJT93}
\begin{eqnarray}
2 \max_{k \in \{1, \dots, 2m\}} \theta_{k} \leq \sum_{j = 1}^{2m} \theta_j.
\end{eqnarray}
An interesting open question to pursue in future work is whether this boundary of the quantum correlation set is realized by self-testing quantum correlations. 

\subsection{Conclusions}
In this paper, we have shown that that all two-party Bell inequalities that define facets of the classical Bell polytope, are violated in a natural semi-definite programming relaxation to the set of quantum correlations, termed 'Almost Quantum' theory. We have also seen that all correlation Bell inequalities that define facets of the lower dimensional correlation Bell polytope, are violated in quantum theory. Finally, we have shown novel quantum Bell inequalities which should be investigated in future work for self-testing applications \cite{Scarani}. The important open question remains whether every facet-defining Bell inequality (of the classical Bell polytope) is violated in Quantum theory. It would be interesting to see if the methods discussed here can be extended to further levels of the convergent hierarchy of semi-definite programming relaxations of the Quantum set. It would also be interesting to find tight bounds on the dimension of the faces of the quantum set, and information-theoretic explanations behind these.

\subsection{Acknowledgments.}
We acknowledge useful discussions with Stefano Pironio, Pawe{\l} Horodecki and Andreas Winter. This work is supported by the Start-up Fund 'Device-Independent Quantum Communication Networks' from The University of Hong Kong. This work was supported by the National Natural Science Foundation of China through grant 11675136, the Hong Kong Research Grant Council through grant 17300918, and the John Templeton Foundation through grants 60609, Quantum Causal Structures, and 61466, The Quantum Information Structure of Spacetime (qiss.fr). The opinions expressed in this publication are those of the author and do not necessarily reflect the views of the John Templeton Foundation.


\end{document}